\newtheorem{theorem}{Theorem}[]
\newtheorem{proposition}[theorem]{Proposition}
\newtheorem{definition}[]{Definition}
\newtheorem{assumption}[]{Assumption}
\newtheorem{remark}[]{Remark}
\newcommand{\N}{\mathbb{N}}
\newcommand{\Z}{\mathbb{Z}}
\newcommand{\R}{\mathbb{R}}
\newcommand{\Prob}{\mathbb{P}}
\newcommand{\hquad}{\hspace{0.5em}} 
\newcommand{\indep}{\perp \!\!\! \perp}
\title{\LARGE \bf Moment State Dynamical Systems for Nonlinear Chance-Constrained Motion Planning}
\author{Allen Wang, Ashkan Jasour, Brian Williams \thanks{All authors are with the Computer Science and Artificial Intelligence Laboratory (CSAIL), Massachusetts Institute of Technology
        {\tt\small \{allenw, jasour, williams @ mit.edu\}}}}
\begin{document}
\maketitle
\thispagestyle{empty}
\pagestyle{empty}
\begin{abstract}
Chance-constrained motion planning requires uncertainty in dynamics to be propagated into uncertainty in state. When nonlinear models are used, Gaussian assumptions on the state distribution do not necessarily apply since almost all random variables propagated through nonlinear dynamics results in non-Gaussian state distributions. To address this, recent works have developed moment-based approaches for enforcing chance-constraints on non-Gaussian state distributions. However, there still lacks fast and accurate moment propagation methods to determine the necessary statistical moments of these state distributions. To address this gap, we present a framework that, given a stochastic dynamical system, can algorithmically search for a new dynamical system in terms of \textit{moment state} that can be used to propagate moments of disturbance random variables into moments of the state distribution. The key algorithm, \textit{TreeRing}, can be applied to a large class of nonlinear systems which we refer to as \textit{trigonometric polynomial systems}. As an example application, we present a distributionally robust RRT (DR-RRT) algorithm that propagates uncertainty through the nonlinear Dubin's car model without linearization.
\end{abstract}
\section{Introduction}
Motivated by the inherently uncertain nature of real-world systems, significant amounts of recent research has aimed at developing motion planning algorithms that explicitly reason about uncertainty in system dynamics, often with uncertainty modeled probabilistically. When probabilistic uncertainty is introduced, \textit{chance-constraints} are a natural way to encode a desired level of safety in the motion planning problem. Historically, algorithms for chance-constrained motion planning under dynamics uncertainty have been restricted to the case of linear dynamics and additive Gaussian uncertainty \cite{CCRRT,LQG1,Larss2, aoude2013probabilistically}. This assumption is popular in part because linear transformations of multivariate Gaussians are still multivariate Gaussian, thus allowing for cumulative density functions to be easily encoded. However, such representations can be limiting and inaccurate as most real world systems are nonlinear and many sources of uncertainty are non-Gaussian. For one, almost all distributions propagated through nonlinear dynamics will produce a non-Gaussian state distribution. To handle nonlinear systems and non-Gaussian uncertainty, some prior works adopt Monte Carlo style approaches and sample the sources of uncertainty to fix the random variables at each of the sample values, generating multiple deterministic dynamical systems. These multiple systems can then be used either as a large number of constraints in an optimization routine or as a sub-routine in a sampling-based motion planner \cite{calafiore2006scenario, Larss1, melchior2007particle}. Such approaches, however, usually become computationally intractable as a relatively large number of samples is needed to certify chance-constraints.

To avoid the computational intractability of sampling, many recent works adopt approaches that can establish upper bounds on state constraint violation using statistical moments of random variables. For example, higher order statistical moments of a random variable can be used in a sums-of-squares (SOS) program to find upper bounds on risk that are optimal w.r.t. the supplied moment information \cite{jasour2018moment, jasour2019risk,bertsimas2005optimal}. When only low order moments, such as mean and covariance, are used, concentration inequalities can be used to establish simple analytic upper bounds on the probability of state constraints being violated \cite{wang2020non,hewing2019cautious}.

To enforce chance-constraints on state, moments of the state distribution need to be known, but it is challenging to determine moments of state when random variables are propagated through nonlinear dynamics. Due to this difficulty, most prior works that utilize moment-based approaches resort to linearizing the dynamics to perform moment propagation \cite{calafiore2006distributionally, dr_rrt,hewing2019cautious,carron2019data}. To address the problem of moment propagation for Gaussian driven stochastic processes, polynomial chaos expansions (PCE) are a well-studied method in the uncertainty quantification community to determine statistics of random variables propagated through some nonlinear function \cite{xiu2010numerical}. A few efforts have been made towards using PCE in control and robotics applications \cite{fisher2011optimal, nakka2019trajectory,kim2012generalized}, but its practicality thus far has been limited by: 1) lack of guarantees on approximation error 2) the large number of terms that are often needed 3) the necessity of calling expensive quadrature routines to determine expansion coefficients.

In this paper, we address the problem of moment propagation through nonlinear systems by developing a framework that can search for a new dynamical system in terms of \textit{moment state}. We present this framework in Section \ref{sec:poly_moment_pro} and show that it can produce a polynomial system or even a \textit{linear system} at the cost of increasing dimensionality. Given the statistical moments of the disturbance variables, these systems are in closed form and, thus, can be evaluated rapidly. In Section \ref{sec:trig_poly_systems}, we extend the range of applicability of this framework to \textit{trigonometric polynomial} systems; systems that have sines or cosines in the system state variables but can be transformed into polynomial systems by a change of variables. This is significant as it extends our method to encompass a wide range of systems important in control and robotics such as Dubin's car. Finally, in Section \ref{sec:dr_rrt}, we provide an example application of this moment propagation framework to develop an improved distributionally robust RRT (DR-RRT) that uses a stochastic Dubin's car model without relying on linearization as was previously required. In Section \ref{sec:comparison}, we provide examples of the accuracy and run-times of our approach compared to linearization and naive Monte Carlo. The source code for \textit{TreeRing} is available at \url{github.com/wangmengyu96/TreeRing}.

\section{Moment Propagation for Polynomial Systems}\label{sec:poly_moment_pro}
This section presents the methodology for finding a moment state dynamical system. Throughout this section, we will be concerned with the system:
\begin{align}
    \mathbf{x}_{t+1} = f(\mathbf{x}_t, \mathbf{w}_t)\label{eq:general_poly_system}
\end{align}
Where $\mathbf{x}_t$ is the $n_x$ dimensional state random vector, $\mathbf{w}_t$ is a $n_w$ dimensional disturbance random vector that is assumed to be independent of $\mathbf{x}_t$, and $f$ is a vector-valued function that is polynomial in $\mathbf{x}_t$ and $\mathbf{w}_t$. This formalism implicitly models control inputs as deterministic control variables can be represented in $\mathbf{w}_t$ as random variables that take on a value with probability one. We assume that the moments of $\mathbf{w}_t$ are known for each time step; we argue this is not a particularly restrictive requirement as moments can be computed via auto-differentiation of the moment generating function of known distributions. While we restrict our focus to discrete time systems in this paper, we would like to note that there are standard discrete time approximations of continuous time stochastic differential equations, such as the Euler-Maruyama and stochastic Runge-Kutta methods, that are analogous to the well known ones for deterministic systems \cite{kloeden2013numerical}.

\ref{subsec:multi_idx} introduces multi-index notation which will be used frequently throughout this paper. \ref{subsec:poly_properties} then establishes several properties of polynomial systems relevant to moment propagation. In \ref{subsec:moment_update_forms} and \ref{subsec:moment_prop_basis}, we show that a moment state dynamical system can be found if we find a \textit{complete moment basis} for the set of moments we are interested in. Finally, \ref{subsec:tree_ring} presents an algorithm for finding complete moment bases, which we call \textit{TreeRing}.

\subsection{Multi-Index Notation}\label{subsec:multi_idx}
For simplicity, we adopt multi-index notation. That is, for any $n_x$ dimensional vector $\mathbf{x}_t = [x_{1_t}, ..., x_{{n_x}_t}]^T$ and multi-index $\alpha\in\N^{n_x}$, we adopt the following short-hand:
\begin{align}
    \mathbf{x}_t^\alpha = \prod_{i=1}^{n_x} x_{i_t}^{\alpha_i}\label{eq:multi_index}
\end{align}
All multi-indices are denoted with lower case Greek letters. This provides much simpler expressions for polynomials and for moments of random vectors. For example, if we want to express the moment $\mathbb{E}[x_{1_t}^3x_{3_t}]$ in multi-index notation, we can simply let $\alpha = (3, 0, 1, 0, ..., 0)$ and write $\mathbb{E}[\mathbf{x}_t^\alpha]$. One of the advantages of this formalism is the existence of a one-to-one mapping between moments of $\mathbf{x}_t$ and multi-indices in $\mathbb{N}^{n_x}$, and similarly for $\mathbf{w}_t$, so it is equivalent to talk about one or the other.

\subsection{Properties of Polynomial Systems}\label{subsec:poly_properties}
We begin by reviewing a few elementary properties of polynomials arising from their ring structure that are relevant to Proposition \ref{prop:vec_moments_poly}. In Proposition \ref{prop:closure} and \ref{prop:degrees}, $K[\mathbf{v}]$ denotes the ring of polynomials in the vector $\mathbf{v}$ over the field $K$. In this paper, we will only work with $\R[\mathbf{v}]$ and $\mathbb{C}[\mathbf{v}]$, but we adopt the more general notation for concision in the propositions. Proposition \ref{prop:closure} states a few closure properties; these are of importance because simulating the discrete-time system essentially consists of applying these operations.
\begin{proposition}\label{prop:closure}
(Closure Properties) For any $p_1, p_2\in K[\mathbf{v}]:$
\leavevmode
\begin{enumerate}
    \item $p_1 + p_2\in K[\mathbf{v}]$ (Addition) 
    \item $p_1p_2\in K[\mathbf{v}]$  (Multiplication)
    \item $p_1^n\in K[\mathbf{v}]$ (Exponentiation)
    \item $(p_1\circ p_2)\in K[\mathbf{v}]$ (Function Composition)
\end{enumerate}
\end{proposition}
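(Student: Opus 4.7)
The plan is to prove each of the four closure properties in turn, leaning almost entirely on the ring axioms for $K[\mathbf{v}]$ and on elementary induction. Because $K[\mathbf{v}]$ is defined as a (commutative) polynomial ring over the field $K$, items (1) and (2) are essentially restatements of the ring axioms and require no additional work beyond invoking the definition: a polynomial is a finite $K$-linear combination of monomials $\mathbf{v}^\alpha$, and the sum or product of two such finite combinations is again a finite $K$-linear combination of monomials, obtained by grouping like terms (for addition) or by distributing and using $\mathbf{v}^\alpha\mathbf{v}^\beta=\mathbf{v}^{\alpha+\beta}$ (for multiplication).

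For (3), I would reduce exponentiation to repeated multiplication. Writing $p_1^n = p_1\cdot p_1^{n-1}$ with the base case $p_1^0 = 1\in K[\mathbf{v}]$, a straightforward induction on $n$ combined with part (2) gives $p_1^n\in K[\mathbf{v}]$ for every $n\in\mathbb{N}$. This step is essentially mechanical once (2) is in hand.

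For (4), the composition $p_1\circ p_2$ should be interpreted as substituting $p_2$ for the formal variable in $p_1$ (in the multivariate case, substituting the components of a vector-valued $p_2$ into the corresponding slots of $\mathbf{v}$ in $p_1$). I would expand $p_1$ in its monomial form $p_1(\mathbf{v}) = \sum_{\alpha} c_\alpha \mathbf{v}^\alpha$ with finitely many nonzero $c_\alpha\in K$, and then observe that
\begin{equation*}
    (p_1\circ p_2)(\mathbf{v}) = \sum_{\alpha} c_\alpha\, p_2(\mathbf{v})^\alpha.
\end{equation*}
Each factor $p_2(\mathbf{v})^{\alpha_i}$ lies in $K[\mathbf{v}]$ by (3); their product lies in $K[\mathbf{v}]$ by (2); and the finite $K$-linear combination of these products lies in $K[\mathbf{v}]$ by (1). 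Hence $p_1\circ p_2\in K[\mathbf{v}]$.

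The only mild subtlety, and the one place where I would be careful, is the interpretation of composition when $\mathbf{v}$ is a vector, since $p_1\circ p_2$ only makes sense if the number of arguments of $p_1$ matches the dimension of the output of $p_2$. Once this convention is fixed, however, the monomial-expansion argument above goes through unchanged, and there is no genuine obstacle to the proof: every part of the proposition is an immediate consequence of the ring structure of $K[\mathbf{v}]$.
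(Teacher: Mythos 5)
Your proof is correct. The paper gives no proof of Proposition \ref{prop:closure} at all --- it presents these closure properties as standard, well-known consequences of the ring structure of $K[\mathbf{v}]$ --- and your argument (ring axioms for addition and multiplication, induction for exponentiation, and monomial expansion plus parts (1)--(3) for composition) is precisely the standard justification one would supply, including the correct attention to the dimension-matching convention needed for composition in the vector-valued case.
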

A useful heuristic for evaluating the complexity of a polynomial expression is its degree. In general, it's undesirable for the degree of polynomials to increase as the application of operations to high degree polynomials is likely to result in a polynomial with a large number of terms. Proposition \ref{prop:degrees} states the degree of the resulting polynomial after applying the closure operations.
\begin{proposition}\label{prop:degrees}
For any $p_1, p_2 \in K[\mathbf{v}]:$
\leavevmode
\begin{enumerate}
    \item $\text{deg}(p_1 + p_2) = \max\{\text{deg}(p_1), \text{deg}(p_2)\}$
    \item $\text{deg}(p_1p_2) = \text{deg}(p_1) + \text{deg}(p_2)$
    \item $\text{deg}(p_1^n) = \text{deg}(p_1)^n$ where $n\in\N$
    \item $\text{deg}(p_1\circ p_2) = \text{deg}(p_1)\text{deg}(p_2)$
\end{enumerate}
\end{proposition}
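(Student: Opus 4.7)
The plan is to prove each of the four claims by reasoning about the leading (highest total degree) monomials of polynomials over the field $K$. I would write $p_i = \sum_{\alpha} c_{i,\alpha}\mathbf{v}^{\alpha}$ in multi-index notation, let $d_i = \deg(p_i)$ denote the largest $|\alpha|$ with $c_{i,\alpha}\neq 0$, and split $p_i$ into its top-degree part $T_i = \sum_{|\alpha|=d_i} c_{i,\alpha}\mathbf{v}^{\alpha}$ and a lower-degree remainder. All four claims then reduce to tracking what happens to $T_1$ and $T_2$ under the respective operation.

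For the addition claim, every monomial appearing in $p_1+p_2$ is a monomial of $p_1$ or $p_2$, so immediately $\deg(p_1+p_2)\le\max\{d_1,d_2\}$; equality is the generic case, holding whenever the degree-$\max\{d_1,d_2\}$ homogeneous parts of $p_1$ and $p_2$ do not sum to zero (automatic, for instance, whenever $d_1\neq d_2$). For multiplication I would invoke that $K$ is a field, hence an integral domain. Expanding $p_1 p_2$ and collecting terms by total degree, the homogeneous component of degree $d_1+d_2$ is exactly $T_1 T_2$; since $K$ has no zero divisors and $T_1,T_2$ are nonzero homogeneous polynomials, $T_1 T_2$ is a nonzero homogeneous polynomial of degree $d_1+d_2$, yielding $\deg(p_1 p_2)=d_1+d_2$.

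The exponentiation claim (which I read as $\deg(p_1^n)=n\cdot\deg(p_1)$, the natural consequence of part~2) follows by induction on $n$: the base case $n=0$ gives the constant $1$ of degree $0$, and the inductive step applies the multiplication identity to $p_1^n = p_1\cdot p_1^{n-1}$. For composition, interpreting $p_1\circ p_2$ as the substitution of (the components of) $p_2$ into $p_1$, I would write $p_1\circ p_2 = \sum_\alpha c_{1,\alpha}\, p_2^{\alpha}$. Each term $p_2^\alpha$ has degree $|\alpha|\cdot d_2$ by iterating part~2, so summing and applying part~1 gives $\deg(p_1\circ p_2)\le d_1 d_2$; equality follows because the degree-$d_1 d_2$ contributions come precisely from the leading monomials $T_1$ composed with the leading part of $p_2$, whose product is again nonzero by the integral domain property.

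The main obstacle is purely one of bookkeeping: making sure that, whenever an equality (rather than an inequality) is asserted, the candidate leading coefficient does not vanish. Parts~2--4 get this for free from the field structure of $K$, but part~1 genuinely requires the non-cancellation caveat; I would state this explicitly (or interpret the claim as the generic degree), since the four closure operations invoked in the moment-propagation algorithm produce polynomials whose leading terms are indexed by fresh monomials that cannot cancel in practice.
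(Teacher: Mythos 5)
Your proof is correct, and it is worth noting that the paper offers no proof of this proposition at all --- it is stated as elementary background on the ring structure of $K[\mathbf{v}]$, so there is nothing to compare against beyond the statement itself. Your leading-homogeneous-part decomposition is the standard argument and it works. More valuably, you catch the two places where the statement as written is imprecise: part 1 is only an inequality $\deg(p_1+p_2)\leq\max\{\deg(p_1),\deg(p_2)\}$ in general, with equality failing under cancellation of the top homogeneous parts (your caveat is exactly right), and part 3 as printed, $\deg(p_1^n)=\deg(p_1)^n$, is a typo for $n\,\deg(p_1)$ --- your reading is confirmed by the paper's own use of the result in Proposition 3b, where $p_i(\mathbf{v})^{\alpha_i}$ is assigned degree $\sigma_i\alpha_i$, not $\sigma_i^{\alpha_i}$. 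One small overclaim: you say parts 2--4 get non-vanishing of the leading coefficient ``for free'' from the field structure, but for part 4 this is only automatic when the outer polynomial $p_1$ is univariate, so that the top-degree contribution is $c_{d_1}T_2^{d_1}\neq 0$ in the integral domain $K[\mathbf{v}]$; if $p_1$ is multivariate and composed with a vector of inner polynomials (as happens when iterating the system dynamics), the leading parts can cancel, e.g.\ $p_1(u,v)=u-v$ composed with $(x^2,\,x^2+x)$ has degree $1$, not $2$. So part 4, like part 1, should really be read as an upper bound that is attained generically --- which is all the paper needs, since Proposition 3b only requires the degree not to exceed $\sum_i\alpha_i\sigma_i$.
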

In the case of vector-valued polynomials, the degree operator returns the sum of the vector of degrees of each entry of the vector-valued polynomial which we will refer to as the \textit{degree vector}. To compute its degree after applying an operation, simply apply the above proposition to each of the scalar-valued entries and sum the resulting values.

Closure under exponentiation is a particularly important property as it implies that raising any polynomial $p(\mathbf{v})$ in a random vector $\mathbf{v}$ to the power of $n\in\N$ results in a polynomial $p(\mathbf{v})^n$. By applying the linearity of expectation to $\mathbb{E}[p(\mathbf{v})^n]$, we see that it can be computed in terms of moments of $\mathbf{v}$. By simply extending this idea to the case where $p$ is a vector-valued function and $p(\mathbf{v})$ is raised to a multi-index, we arrive at Proposition \ref{prop:vec_moments_poly}a which is a simple extension of Proposition 1 in \cite{wang2020non} from the scalar-valued to the vector-valued case. Proposition \ref{prop:vec_moments_poly}b then states the degree of the resulting polynomial.

\begin{proposition}\label{prop:vec_moments_poly}
\leavevmode
a) For any $n$ dimensional random vector $\mathbf{v}$, any vector-valued polynomial $p\in\R[\mathbf{v}]$ s.t. $p:\R^n\rightarrow\R^m$ with degree vector $\sigma$ and any $\alpha\in\N^m$, there exists a set of multi-indices, $\mathcal{B}[\alpha]\subset\mathbb{N}^n$, and coefficients $\{c_\beta\in\R : \beta\in\mathcal{B}[\alpha]\}$ s.t:
\begin{align}
    \mathbb{E}[p(\mathbf{v})^\alpha] = \sum_{\beta\in\mathcal{B}[\alpha]}c_\beta\mathbb{E}[\mathbf{v}^\beta]
\end{align}
b) $\text{deg}\left(p(\mathbf{v})^\alpha\right) =  \sum_{i=1}^n \alpha_i\sigma_i$
\end{proposition}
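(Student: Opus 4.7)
The plan is to handle both parts by unfolding the multi-index notation and reducing everything to the scalar closure and degree results of Propositions \ref{prop:closure} and \ref{prop:degrees}, together with linearity of expectation.

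For part (a), I would first rewrite $p(\mathbf{v})^\alpha$ as $\prod_{i=1}^m p_i(\mathbf{v})^{\alpha_i}$ via the multi-index convention, where $p_i$ denotes the $i$-th scalar component of $p$. Closure under exponentiation (Proposition \ref{prop:closure}.3) places each factor $p_i(\mathbf{v})^{\alpha_i}$ in $\R[\mathbf{v}]$, and closure under multiplication (Proposition \ref{prop:closure}.2) then guarantees that the product itself is an element of $\R[\mathbf{v}]$. Hence there exists a finite index set $\mathcal{B}[\alpha]\subset\N^n$ and real coefficients $\{c_\beta\}_{\beta\in\mathcal{B}[\alpha]}$ such that $p(\mathbf{v})^\alpha = \sum_{\beta\in\mathcal{B}[\alpha]} c_\beta\,\mathbf{v}^\beta$. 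Taking expectations on both sides and applying linearity of expectation over this finite sum then yields the claimed identity, with the same $\mathcal{B}[\alpha]$ and $c_\beta$ carried through.

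For part (b), I would start from the same factorization and chain together the degree rules. Proposition \ref{prop:degrees}.2 (degree of a product equals the sum of degrees) converts $\deg\bigl(\prod_i p_i(\mathbf{v})^{\alpha_i}\bigr)$ into $\sum_i \deg(p_i(\mathbf{v})^{\alpha_i})$, and Proposition \ref{prop:degrees}.3 (degree of $p_i^{\alpha_i}$ scales linearly in $\alpha_i$) gives $\deg(p_i(\mathbf{v})^{\alpha_i}) = \alpha_i\sigma_i$, where $\sigma_i$ is the $i$-th entry of the degree vector of $p$. Summing these contributions produces the stated formula.

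I do not expect any real obstacle here: once the multi-index notation is unpacked, both claims follow essentially immediately from Propositions \ref{prop:closure} and \ref{prop:degrees}, with linearity of expectation handling the final step of part (a). The only work involved is bookkeeping, namely observing that $\mathcal{B}[\alpha]$ and the $c_\beta$ are precisely the monomial support and coefficients obtained by carrying out the product expansion and therefore depend on both $p$ and $\alpha$, and that the product ranges over the $m$ output components of $p$ while each $\mathbf{v}^\beta$ ranges over the $n$ input coordinates.
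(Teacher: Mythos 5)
Your proof is correct and matches the paper's approach: part (b) is argued identically (apply Proposition \ref{prop:degrees}.3 to each $p_i(\mathbf{v})^{\alpha_i}$, then Proposition \ref{prop:degrees}.2 to the product), and for part (a) the paper simply cites \cite{wang2020non}, but the closure-plus-linearity-of-expectation argument you spell out is exactly the one the paper sketches in the paragraph preceding the proposition.
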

\begin{proof}
a) See \cite{wang2020non}. b)
$p_i$ has degree $\sigma_i$, so $p_i(\mathbf{v})^{\alpha_i}$ has degree $\sigma_i\alpha_i$ in $\mathbf{v}$ by Proposition \ref{prop:degrees}.3. By Proposition \ref{prop:degrees}.2, $p(\mathbf{v})^\alpha$ thus has degree $\sum_{i=1}^n\alpha_i\sigma_i$.
\end{proof}
\begin{remark}
While $p(\mathbf{v})$ is a vector-valued polynomial in $\mathbf{v}$, $p(\mathbf{v})^\alpha$ is a scalar-valued polynomial in Proposition \ref{prop:vec_moments_poly}.
\end{remark}
\subsection{Moment Update Forms}\label{subsec:moment_update_forms}

If we treat the state and disturbance vectors as one vector in our polynomial system, we can apply Proposition \ref{prop:vec_moments_poly}a and see that any moment of $\mathbf{x}_{t+1}$ can be computed in terms of moments of the joint distribution of $\mathbf{x}_t$ and $\mathbf{w}_t$. Since we assumed that the disturbance is independent of state, we have that moments of $\mathbf{x}_{t+1}$ can be expressed in the form of equation \ref{eq:exog_state_moments}, the moment update form, by splitting the multi-index s.t. $\beta = [\beta_x, \beta_w]$. Note that by Proposition \ref{prop:vec_moments_poly}b, the polynomial in the moment update form is of relatively low order: $\sum_{i=1}^{n_x} \alpha_i\sigma_i$ to be precise.
\begin{definition}
Given the polynomial system, equation \ref{eq:general_poly_system}, and some multi-index $\alpha\in\mathbb{N}^{n_x}$ corresponding to a moment of the state vector, the \textbf{moment update form} (MUF) is:
\begin{align}
\mathbb{E}[\mathbf{x}_{t + 1}^\alpha] = \sum_{\beta\in\mathcal{B}[\alpha]}c_\beta \mathbb{E}\left[\mathbf{x}_t^{\beta_x}\right]\mathbb{E}\left[\mathbf{w}_t^{\beta_w}\right]\label{eq:exog_state_moments}
\end{align}
\end{definition}
Throughout this paper, we will denote the multi-index set of the moment update form induced by $\alpha$ as $\mathcal{B}[\alpha]$ and denote the sets of the state and disturbance components respectively by $\mathcal{B}_x[\alpha]$ and $\mathcal{B}_w[\alpha]$.
\begin{remark}In practice, moment update forms can be easily found with standard symbolic algebra packages. In this paper, we used the the built-in functionality in SymPy to express symbolic expressions as polynomials in a set of variables \cite{meurer2017sympy}.
\end{remark}
The moment update form essentially states that moments of the state vector at time $t + 1$ are \textit{linear} in the moments of the state vector at time $t$ when given moments of external disturbances.  Now if there is independence between elements of the state vector, $\mathbb{E}[\mathbf{x}_t^{\beta_x}]$ may be factored into the product of lower order moments to arrive at what we will refer to as the \textit{reduced moment update form}. There are ultimately interesting tradeoffs between using the reduced and non-reduced moment update forms that we will discuss in the following subsection. 
\begin{definition} \label{def:reduced_moment_update}
Given the polynomial system, equation \ref{eq:general_poly_system}, and some multi-index $\alpha\in\mathbb{N}^{n_x}$ corresponding to a moment of the state vector, the \textbf{reduced moment update form} is: 
\begin{align} 
    \mathbb{E}[\mathbf{x}_{t+1}^\alpha] = \sum_{\beta\in\mathcal{B}[\alpha]}c_\beta\mathbb{E}\left[\mathbf{w}_t^{\beta_w}\right] \prod_{i=1}^{n_\beta}\mathbb{E}\left[\mathbf{x}_t^{\beta_x^{(i)}}\right]\label{eq:reduced_moment}
\end{align} 
Where $\sum_{i=1}^{n_\beta}\beta_x^{(i)} = \beta_x$, the expression involving the product operator equals $\mathbb{E}[\mathbf{x}_t^{\beta_x}]$ and for each $\beta_x^{(i)}$, $\mathbb{E}\left[\mathbf{x}_t^{\beta_x^{(i)}}\right]$ can not be factored further.
\end{definition}
The reduced moment update form requires the additional step of factorizing $\mathbb{E}[\mathbf{x}_t^{\beta_x}]$, but this can be performed with standard search algorithms to find connected components on the, manually specified, \textit{dependence graph} of $\mathbf{x}_t$.

\begin{definition}\label{def:dependence_graph}
The graph $G_{\mathbf{x}} = (V_\mathbf{x}, E_\mathbf{x})$ is the \textbf{dependence graph} of $\mathbf{x}$ where $V_\mathbf{x}$ is the set of variables in $\mathbf{x}$ and $E_{\mathbf{x}}$ is the set of undirected edges s.t. $(x_i, x_j)\in E_{\mathbf{x}}$ i.f.f. $x_i$ and $x_j$ are dependent.
\end{definition}
However, pairwise independence of a collection of random variables does not necessarily imply mutual independence. This is a difficult issue to resolve in general, so we make the following assumption on pairwise independence of state variables.
\begin{assumption}
Suppose $\mathbf{a}$ and $\mathbf{b}$ are random vectors that are subsets of $\mathbf{x}_t$. Then pairwise independence between elements of $\mathbf{a}$ and $\mathbf{b}$ implies $\mathbf{a}$ and $\mathbf{b}$ are independent. That is, letting $f_{\mathbf{a}}, f_{\mathbf{b}}, f_{\mathbf{a},\mathbf{b}}$ denote probability density functions:
\begin{align}
    f_{\mathbf{a},\mathbf{b}}(\cdot, \cdot) = f_\mathbf{a}(\cdot)f_\mathbf{b}(\cdot)\Leftrightarrow \forall a\in\mathbf{a},\forall b\in\mathbf{b}, a\indep b\label{eq:factor_example}
\end{align}
\end{assumption}
Note that moments of joint distributions can be factored in the same way as probability density functions above. With these facts in mind, the goal now is to find a partition of $\mathbf{x}_t$ into multiple random vectors s.t. equation \ref{eq:factor_example} can be applied. It should be easy to see that the set of connected components of the dependence graph forms the correct partition because by definition, pairwise independence corresponds to the existence of an edge. With the example in Fig. \ref{fig:dependence_graph}, we can perform the following factorization:
\begin{align}
    f_{\mathbf{x}}(\cdot) = f_{x_1, x_2, x_3, x_4}(\cdot)f_{x_5, x_6}(\cdot)f_{x_7}(\cdot)f_{x_8}(\cdot)
\end{align}
Thus, for any given multi-index $\alpha$, we can find the sub-graph of the dependence graph containing the variables with non-zero entries in $\alpha$ and then find a factorization by finding the connected components of the resulting sub-graph.
\begin{figure}
    \vspace*{0.3cm}
    \centering
    \includegraphics[width = 0.7\linewidth]{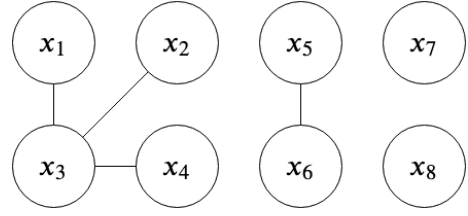}
    \caption{Example of a Dependence Graph.}
    \label{fig:dependence_graph}
\end{figure}

\subsection{Moment Bases}\label{subsec:moment_prop_basis}
Since we assume the moments of the external disturbances are known, the (possibly reduced) moment update form seems to suggest there may be a way to recursively determine the desired moments of the state vector across a finite time horizon when given some initial condition. In fact, this is possible if we find a \textit{complete moment basis}. Suppose we are interested in determining a set of moments of the state vector $\mathbf{x}_{t+1}$; we will denote the set of multi-indices corresponding to these moments by $\mathcal{A}\subset\N^{n_x}$. Such sets of multi-indices will be referred to as \textit{moment bases}. Given a moment basis, we denote the \textit{moment state} of the system by:
\begin{align}
    \mathbf{x}_t[\mathcal{A}] := \{\mathbb{E}[\mathbf{x}_t^\alpha] : \alpha\in\mathcal{A}\}
\end{align}
This notation also lends itself to expressing the set of all moments of $\mathbf{w}_t$ that are needed. Recall that for any given $\alpha$, $\mathcal{B}_w[\alpha]$ denotes the set of disturbance multi-indices in the (possibly reduced) MUF, so for each $\alpha\in\mathcal{A}$ we can denote the set of disturbances needed for that particular $\alpha$ by:
\begin{align}
    \mathbf{w}_t\left[\mathcal{B}_w[\alpha]\right] := \{\mathbb{E}[\mathbf{x}_t^{\beta_w}] : \beta_w\in\mathcal{B}_w[\alpha]\}
\end{align}
By taking the union across $\mathcal{A}$, we arrive at the set of disturbance moments needed. For brevity, we will denote:
\begin{align}
    \mathbf{w}_t\left[\mathcal{B}_w[\mathcal{A}]\right] := \bigcup_{\alpha\in\mathcal{A}}\mathbf{w}_t[\mathcal{B}_w[\alpha]]
\end{align}

We are now ready for the definitions of complete moment bases w.r.t. the (possibly reduced) MUF. When a moment basis is not complete, we will want to find a \textit{completion}.
\begin{definition}\label{def:complete_basis}
$\mathcal{A}$ is a \textbf{complete moment basis w.r.t. the moment update form} if $\forall \alpha\in\mathcal{A}$, the multi-index set in the moment update form corresponding to the state vector, $\mathcal{B}_x[\alpha]$, is a subset of $\mathcal{A}$.
\end{definition}
\begin{definition}\label{def:complete_basis_reduced}
$\mathcal{A}$ is a \textbf{complete moment basis w.r.t. the reduced moment update form} if $\forall \alpha\in\mathcal{A}$,  $\forall\beta\in\mathcal{B}_x[\alpha]$, $\beta_x^{(i)}\in\mathcal{A}$.
\end{definition}
\begin{definition}\label{def:completion}
A \textbf{completion of a moment basis} denoted $\bar{\mathcal{A}}$ is a super-set of $\mathcal{A}$ that is a complete moment basis w.r.t. the (possibly reduced) moment update form.
\end{definition}
In both cases, completeness of the moment bases is a sufficient condition that the (possibly reduced) MUFs of the elements in the moment basis form a moment state dynamical system. This should be clear as completeness of $\mathcal{A}$ implies that every moment in $\mathbf{x}_{t+1}[\mathcal{A}]$ can be expressed explicitly in terms of $\mathbf{x}_t[\mathcal{A}]$ and moments of $\mathbf{w}_t$.

\begin{definition}
A \textbf{moment state dynamical system} given a complete moment basis $\mathcal{A}$ is:
\begin{align}
    \mathbf{x}_{t+1}[\mathcal{A}] = h(\mathbf{x}_t[\mathcal{A}], \mathbf{w}_t\left[\mathcal{B}[\mathcal{A}]\right])
\end{align}
Where $h$ is a vector-valued function formed by the (possibly reduced) moment update forms of $\alpha\in\mathcal{A}$.
\end{definition}

Ultimately, both Definition \ref{def:complete_basis} and \ref{def:complete_basis_reduced} simply try to capture the idea that every moment at time step $t+1$ can be computed in terms of moments in the moment state at the previous time step. However, the distinction is important because a complete moment basis w.r.t. the un-reduced MUF has \textit{linear time-varying dynamics}. This can be easily seen by inspection of equation \ref{eq:exog_state_moments} because moments of $\mathbf{w}_t$ are assumed to be known and thus can be treated as constants. However, empirically, we find that using the reduced MUF has the advantage of producing smaller completions. It also makes intuitive sense that using reduced MUFs will result in smaller completions as the reduced MUF tries to express moments in terms of the lowest order moments possible.


\subsection{TreeRing: An Algorithm for Finding Completions of Moment Bases}\label{subsec:tree_ring}
If a moment basis $\mathcal{A}$ is incomplete, a simple step to take would be to simply expand the moment basis by adding the multi-indices that are not in $\mathcal{A}$ to $\mathcal{A}$ and determining their MUFs. Recursively doing so produces an algorithm analogous to tree search, where the nodes are moments of the state vector that show up in MUFs, and the nodes are expanded if the moment is not in the current moment basis. We call this algorithm \textit{TreeRing} as it is essentially tree search, but over the ring of polynomials. TreeRing in Algorithm \ref{alg:treering} uses the reduced MUF, but an algorithm using the un-reduced MUF can be arrived at by simply removing line number \ref{eq:reduced_moment_line}, letting $n_\beta = 1$, and replacing $\beta_x^{(i)}$ with $\beta$ everywhere. If the algorithm terminates, then we have that the resulting $\mathcal{A}$ is a complete moment basis. By finding all of the (possibly reduced) MUFs associated with moments in the complete moment basis, we arrive at our moment state dynamical system.
\begin{algorithm}
\caption{TreeRing}\label{alg:treering}
\begin{algorithmic}[1]
\Procedure{Expand}{$\alpha$, $\mathbf{x}_t$, $\mathbf{w}_t$, $f$, $\mathcal{A}$}
    \State $\mathcal{B}[\alpha], C_{\mathcal{B}[\alpha]}\leftarrow$ Express $\mathbb{E}[f(\mathbf{x}_{t+1}, \mathbf{w}_{t+1})^\alpha]$ in MUF
    \State Add $\alpha$ to $\mathcal{A}$
    \ForAll{$\beta\in\mathcal{B}[\alpha]$}
        \State Factor $\mathbb{E}[\mathbf{x}_t^{\beta_x}]$ to generate $n_\beta$ multi-indicies $\beta_x^{(i)}$\label{eq:reduced_moment_line}
        \ForAll{$i\in[n_\beta]$}
            \If{$\beta_x^{(i)}\notin\mathcal{A}$}
                \State \text{Expand}($\beta_x^{(i)}, \mathbf{x}_t, \mathbf{w}_t, f, \mathcal{A}$)
            \EndIf
        \EndFor
    \EndFor
\EndProcedure
\end{algorithmic}
\end{algorithm}
\section{Extension to Trigonometric Polynomial Systems}\label{sec:trig_poly_systems}
Many dynamical systems in control and robotics are not polynomial, but can be made polynomial by a change of variables replacing sines and cosines with new indeterminants. We will refer to such systems as \textit{trigonometric polynomial systems}. For example, consider the following Dubin's car model with uncertainty in actuation (note that the control inputs are absorbed as constant shifts in $w_{v_t}$ and $w_{\theta_t}$ for brevity):
\begin{subequations}
\begin{align}
    x_{t+1} &= x_t + v_t\cos(\theta_t)\\
    y_{t+1} &= y_t + v_t\sin(\theta_t)\\
    v_{t+1} &= v_t + w_{v_t}\\
    \theta_{t+1} &= \theta_t + w_{\theta_t}
\end{align}
\end{subequations}
We can transform the above system into a polynomial system by making the substitutions $c_t = \cos(\theta_t)$ and $s_t = \sin(\theta_t)$.
\begin{subequations}\label{eq:dubins_poly}
\begin{align}
    x_{t + 1} &= x_t + v_tc_t\\
    y_{t + 1} &= y_t + v_ts_t\\
    v_{t + 1} &= v_t + w_{v_t}\\
    c_{t + 1} &= c_tc_{w_t} - s_ts_{w_t}\\
    s_{t + 1} &= s_tc_{w_t} + c_ts_{w_t}
\end{align}
\end{subequations}
Update relations for $c_t$ and $s_t$ were arrived at by using the trigonometric sums formulas to expand out $\cos(\theta_t + w_{\theta_t})$ and $\sin(\theta_t + w_{\theta_t})$ into the expressions shown in equation \ref{eq:dubins_poly} where $c_{w_{t}} = \cos(w_{\theta_t})$ and $s_{w_t} = \sin(w_{\theta_t})$.
To apply the methods developed for polynomial systems, however, we need to be able to compute the moments of $c_{w_t}$ and $s_{w_t}$. In \ref{subsec:trig_moments}, we show that these \textit{trigonometric moments} can be computed directly in terms of the characteristic functions of the original random variables. This retains the generality of our results on polynomial systems as extensive catalogues of characteristic functions for distributions are available.

\subsection{Computing Trigonometric Moments}\label{subsec:trig_moments}
In this subsection, we show moments of the form:
\begin{align}
    \mathbb{E}[\cos^m(X)\sin^n(X)]
\end{align}
for any random variable $X$, can be computed in terms of the characteristic function of $X$, defined as:
\begin{align}
    \Phi_X(t) = \mathbb{E}[e^{itX}]
\end{align}
The characteristic function is simply the Fourier transform of the PDF of $X$, so it makes intuitive sense that there exists deep connections between it and trigonometric functions. We provide explicit closed form solutions for the pure moment case, i.e when one of $m$ or $n$ equals zero, and provide an algorithmic approach for finding an expression using symbolic algebra in the case when both $m, n>0$. The first major insight is that moments of $\cos(X)$ and $\sin(X)$ may be related to the characteristic function via Euler's identity:
\begin{subequations}
\begin{align}
    \Phi_{X}(t) &= \mathbb{E}[e^{itX}]\\
    &= \mathbb{E}[\cos(tX) + i\sin(tX)]\\
    &= \mathbb{E}[\cos(tX)] + i\mathbb{E}[\sin(tX)]
\end{align}
\end{subequations}
And so we immediately have a relation that can be used to compute the first trigonometric moments:
\begin{subequations}\label{eq:multi_angle_trig_moments}
\begin{align}
    \mathbb{E}[\cos(tX)] = \text{Re}[\Phi_X(t)]\\
    \mathbb{E}[\sin(tX)] = \text{Im}[\Phi_X(t)]
\end{align}
\end{subequations}
The trigonometric power formulas can then be used to relate quantities of the form $\cos(mX)$ to $\cos^m(X)$ for any $m\in\N$ and similarly for the sin function. Applying the expectation operator to the trigonometric power formulas, applying the linearity of expectation and substituting in equation \ref{eq:multi_angle_trig_moments}, we have for some coefficients $s_0, c_0, s_{0,k}, s_{1,k}, c_{0, k}, s_{1, k}$ that $\forall n\in\N$ equations \ref{eq:higher_trig_moments} hold. The exact values of the coefficients can be found in \cite{trig_power}; they are omitted here for concision.
\begingroup
\allowdisplaybreaks
\begin{subequations}\label{eq:higher_trig_moments}
\begin{align}
    \mathbb{E}[\sin^{2n}(X)] &= s_0 + \sum_{k=0}^{n-1}s_{0,k}\text{Re}[\Phi_X(2(n-k))]\\
    \mathbb{E}[\sin^{2n+1}(X)] &= \sum_{k=0}^n s_{1,k}\text{Im}[\Phi_X(2n + 1 - 2k)]\\
    \mathbb{E}[\cos^{2n}(X)] &= c_{0} + \sum_{k=0}^{n-1}c_{0,k}\text{Re}[\Phi_X(2(n-k))]\\
    \mathbb{E}[\cos^{2n+1}(X)] &= \sum_{k=0}^nc_{1,k}\text{Re}[\Phi_X(2n + 1 - 2k)]
\end{align}
\end{subequations}
\endgroup
In the general case when $m, n > 0$, we can substitute in the exponential forms of $\sin$ and $\cos$ to get:
\begin{subequations}
\begin{align}
&\mathbb{E}[\cos^m(X)\sin^n(X)] = \\
    &\mathbb{E}\left[\frac{1}{i^n2^{m + n}}(e^{iX} + e^{-iX})^m(e^{iX} - e^{-iX})^n\right]\label{eq:cross_trig_moments}
\end{align}
\end{subequations}
Observe that the expression:
\begin{align}
    \frac{1}{i^n2^{m + n}}(e^{iX} + e^{-iX})^m(e^{iX} - e^{-iX})^n
\end{align}
can be written as a polynomial in $e^{iX}$ and $e^{-iX}$ by the closure property of $\mathbb{C}[X]$. An alternative view is that such an expression would be a \textit{Laurent polynomial} in $e^{iX}$ (Laurent polynomials are generalizations of polynomials that allow for negative powers). In either case, since the following equality holds $\forall t\in\Z$:
\begin{align}
    \mathbb{E}[e^{{iX}^t}] = \mathbb{E}[e^{itX}] = \Phi_{X}(t)
\end{align}
By expanding equation \ref{eq:cross_trig_moments} into a (possibly Laurent) polynomial and applying the linearity of expectation, we arrive at the weighted sum of $\Phi_X$ evaluated at multiple values. Like before, the (possibly Laurent) polynomial can be found using symbolic algebra.


\section{Example Application: DR-RRT with Stochastic Dubin's Dynamics}\label{sec:dr_rrt}
As an example application of this moment propagation framework, we present a version of distributionally robust RRT (DR-RRT) with uncertainty propagated through nonlinear Dubin's car dynamics. We make two major improvements to the original DR-RRT formulation: 1) we propagate moments through nonlinear dynamics as opposed to linear systems in the original formulation and 2) we use a different method to assess risk that is strictly less conservative \cite{dr_rrt}. Throughout this section, the robot is modelled with the stochastic Dubin's car described in equation \ref{eq:dubins_poly} so $\mathbf{x}_t = [x_t, y_t, v_t, c_t, s_t]$ denotes the state of the robot and $\mathbf{p}_t = [x_t, y_t]$ denotes its position. In this section, for any random vector $\mathbf{v}$, $\mu_{\mathbf{v}}, \Sigma_{\mathbf{v}}$ will denote its mean vector and covariance matrix. Algorithm \ref{alg:nonlinear_dr_rrt} describes this nonlinear DR-RRT algorithm; the key points of interest are the $\text{StochasticSteer}$ and $\text{RiskBound}$ functions which are described in the following subsections.
\subsection{Stochastic Steering}
It is common in RRTs, especially for ground vehicles, to perform steering between nodes with Dubins paths \cite{karaman2011anytime, dubins1957curves}. In our formulation, a deterministic Dubin's path is calculated, then control inputs for steering, $u_{\theta_t}$, to track the Dubin's path at a constant speed are computed with the discretized deterministic Dubin's dynamics. The steering disturbance in our stochastic system given by Eq \ref{eq:dubins_poly}, $w_{\theta_t},$ is then offset by $u_{\theta_t}$ to account for both the control and uncertainty in a single term. In an offline step, we run \textit{TreeRing} on the system in equation \ref{eq:dubins_poly} with an initial moment basis, $\mathcal{A}$, containing the first and second moments of $x$ and $y$:
\begin{align}
    \mathbf{x}[\mathcal{A}] = \{x, y, xy, x^2, y^2\}
\end{align}
And we arrive at the following completion w.r.t. the reduced MUF $\mathbf{x}\left[\bar{\mathcal{A}}\right]$:
\begin{align}
\begin{split}\label{eq:comp_moment_basis}
    \mathbf{x}[\mathcal{A}]\cup
    \{&c, s, v, v^2, xs, ys, xc, yc, s^2,\\
      & c^2, cs, xvs, xvc, yvs, yvc\}
\end{split}
\end{align}
We use this completion as the completion w.r.t the un-reduced MUF also has the moments:
\begin{align}
    \{s^2v^2, s^2v, csv^2, c^2v, c^2v^2\}
\end{align}
requiring more equations in the system. The reduced MUF for each $\alpha\in\mathcal{A}$ is then found using symbolic algebra to arrive at our moment state dynamical system. In this way, we can compute a sequence of mean vectors and covariance matrices to steer the vehicle from an existing node to a new one. To see the full set of equations associated with these moment bases, see the example provided in the Github repository.

\begin{figure}[!b]
    \centering
    \includegraphics[width = 0.9\linewidth]{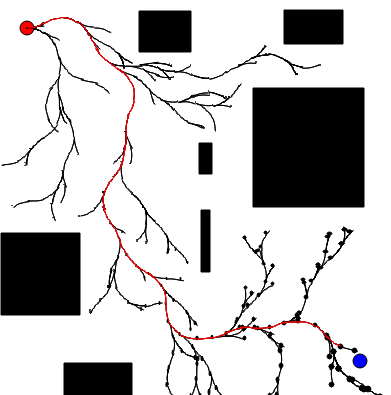}
    \caption{Example results from our DR-RRT. Non-Gaussian distributions in general are difficult to visualize as the concept of confidence ellipses don't apply. Thus, the sizes of the nodes correspond to the one $\sigma$ confidence ellipses if the position distribution is actually Gaussian. For this experiment, $w_{v_t}, w_{\theta_t}\sim\mathcal{N}(0, 10^{-8})$. A chance constraint of 0.1 was set for the whole open loop planning horizon.}
    \label{fig:my_label}
\end{figure}
\subsection{Bounding Risk}
In the environment, $n_o$ polytopic obstacles are represented by the intersection of half-spaces defined by a collection of linear inequalities:
\begin{align}
    \chi_{obs_i} = \cap_{j=1}^{n_i}\{\mathbf{x}\in\mathbb{R}^2 : a_{ij}^T\mathbf{x} + b_{ij}\leq 0\} \hquad i\in[n_o]
\end{align}
We are interested in ensuring that across a $T$ time step horizon, the probability of collision with an obstacle is less than some $\epsilon\in(0, 1)$. This risk can be upper bounded as such by applying Boole's Inequaltiy twice:
\begin{align}
    \mathbb{P}\bigg(\bigcup_{t=1}^T\bigcup_{i = 1}^{n_o}\{\mathbf{p}_t \in \chi_{obs_i}\}\bigg)\leq \sum_{t=1}^T\sum_{i=1}^{n_0}\mathbb{P}(\{\mathbf{p}_t\in\chi_{obs_i}\})\label{eq:booles}
\end{align}
So it is sufficient to establish upper bounds on $\mathbb{P}(\{\mathbf{p}_t\in\chi_{obs_i}\})$ for all times and for all obstacles. Since the probability of the intersection of a collection of events is less than the probability of any individual event occurring, for the $i_{th}$ obstacle, we have that:
\begin{align}
    \mathbb{P}(\mathbf{p}_t\in\chi_{obs_i}) \leq \min_{j\in[n_i]} \mathbb{P}(\{a_{ij}^T\mathbf{p}_t + b_{ij}\leq 0\})\label{eq:min_bound}
\end{align}
That is, the probability of the robot being in the $i_{th}$ obstacle is upper bounded by the probability of violating any one of the linear inequalities that define the obstacle; so we can upper-bound the probability of being in the $i_{th}$ obstacle with an upper-bound on the probability of being in the least risky half-space that defines it. Departing from previous works, we directly find an upper bound on risk as opposed to checking a sufficient condition (more discussion in \ref{subsec:comparison}) \cite{calafiore2006distributionally, dr_rrt}. By Cantelli's inequality, for any measurable function $g$, whenever $\mu_{g(\mathbf{p}_t)} \geq 0$:
\begin{equation}
    \Prob(g(\mathbf{p}_t)\leq 0)\leq \frac{\text{Var}[g(\mathbf{p}_t)]}{\text{Var}[g(\mathbf{p}_t)] + \mathbb{E}[g(\mathbf{p}_t)]^2}\label{eq:cantelli}
\end{equation}
Note that the restriction that $\mu_{g(\mathbf{p}_t)}\geq 0$ is not particularly restrictive as it is only not satisfied when the average case corresponds to collision. Thus, we simply let the risk bound be $1$ whenever $\mu_{g(\mathbf{p}_t)}< 0$. In our case, we have that $g(\mathbf{p}_t) = a_{ij}^T\mathbf{p}_t + b_{ij}$ for each linear inequality. The mean and covariance can be computed in terms of $\mu_{\mathbf{p}_t}$ and $\Sigma_{\mathbf{p}_t}$ with:
\begin{subequations}
\begin{align}
    &\mathbb{E}[a_{ij}^T\mathbf{p}_t + b_{ij}] = a_{ij}^T\mu_{\mathbf{p}_t} + b_{ij}\\
    &\text{Var}[a_{ij}^T\mathbf{p}_t + b_{ij}] = a_{ij}^T\Sigma_{\mathbf{p}_t}a_{ij}
\end{align}
\end{subequations}
Note that since the Cantelli bound holds for any measurable function $g$, in principle, we can extend this formulation to the case of non-convex obstacles described as the sub-level sets of polynomials. However, since we need the mean and variance of $g(\mathbf{p}_t)$, doing so would require higher moments of $\mathbf{p}_t$, so more moments need to be propagated.

\begin{algorithm}
\caption{Nonlinear DR-RRT}\label{alg:nonlinear_dr_rrt}
\textbf{Input} :  RRT tree $\mathcal{T}$, obstacles $\chi_{obs}$,  disturbance model $\mathbf{w}$
\begin{algorithmic}[1]
\Procedure{BuildRRT}{$\mathcal{T}, \chi_{obs}, \mathbf{w}$}
\State $(n_{new}, n_{near})\leftarrow \text{Sample and find the nearest node}$
\State $\mathbf{u}_{0:T}\leftarrow \text{Steer}(n_{new}, n_{near})$
\State $(\mu_{\mathbf{x}_{0:T}}, \Sigma_{\mathbf{x}_{0:T}} )\leftarrow \text{StochasticSteer}(n_{near}, \mathbf{u}_{0:T}, \mathbf{w})$
\State $\epsilon_{risk}\leftarrow \text{RiskBound}(\mu_{\mathbf{x}_{0:T}}, \Sigma_{\mathbf{x}_{0:T}}, \chi_{obs})$
\State $\epsilon_{new} = \epsilon_{risk} + \text{RiskToNode}(n_{near})$
\If{$\epsilon_{new} < \epsilon$}
    \State $\text{RiskToNode}(n_{new}) = \epsilon_{new}$
    \State Add $\mu_{\mathbf{x}_T}$ and $\Sigma_{\mathbf{x}_T}$ to $n_{new}$
    \State Add $n_{new}$ and $(n_{near}, n_{new})$ to $\mathcal{T}$
\EndIf\Comment{Rewire for RRT*}
\EndProcedure
\end{algorithmic}
\end{algorithm}

\subsection{Comparison with Prior DR Constraints}\label{subsec:comparison}
In prior works \cite{calafiore2006distributionally, dr_rrt}, given a chance-constraint of $\epsilon$ across the whole horizon, a maximum allowable risk for each time step and each obstacle is assigned s.t. $\sum_{t=1}^T\sum_{i=1}^{n_o}\epsilon_{t,i} = \epsilon$. The following sufficient condition for the $\epsilon_{t,i}$ chance-constraint to be satisfied is then checked for each linear inequality:
\begin{align}
    \mathbb{E}[a_{ij}^T\mathbf{p}_t + b_{ij}]\geq \sqrt{a_{ij}^T\Sigma_{\mathbf{p}_t}a_{ij}}\sqrt{\frac{1-\epsilon_{t,i}}{\epsilon_{t,i}}}\label{eq:distr_robust}
\end{align}
And collision is defined as the above condition not being satisfied. We would like to note that this sufficient condition is ultimately equivalent to the bound we employ as simple manipulation of the equation results in:
\begin{subequations}
\begin{align}
\epsilon_{t,i}&\geq \frac{1}{1 + \frac{\mathbb{E}[a_{ij}^T\mathbf{p}_t + b_{ij}]^2}{a_{ij}^T\Sigma_{\mathbf{p}_t}a_{ij}}}\\
&= \frac{a_{ij}^T\Sigma_{\mathbf{p}_t}a_{ij}}{a_{ij}^T\Sigma_{\mathbf{p}_t}a_{ij} + \mathbb{E}[a_{ij}^T\mathbf{p}_t + b_{ij}]^2}\\
&= \frac{\text{Var}(a_{ij}^T\mathbf{p}_t + b_{ij})}{\text{Var}(a_{ij}^T\mathbf{p}_t + b_{ij}) + \mathbb{E}[a_{ij}^T\mathbf{p}_t + b_{ij}]^2}
\end{align}
\end{subequations}
Thus, our approach uses the same fundamental inequality. However, instead of allocating $\epsilon_{t,i}$ \textit{a priori}, we determine upper bounds on risk and ensure that the summation is less than the overall chance-constraint $\epsilon$ thus allowing for less conservative results.

\section{Comparison with Monte Carlo and Linearization}\label{sec:comparison}
This section uses the Dubin's car model to compare the proposed method for exact nonlinear moment propagation against naive Monte Carlo and moment propagation with a linearized system. We linearize the Dubin's car model about the initial state and, since we apply Euler integration, the matrices $A, B$ and vector $\mathbf{c}$ are scaled by the time step to arrive at the following discrete time system:
\begin{align}
    \mathbf{x}_{t+1} = (I + A)\mathbf{x}_t + B\mathbf{w}_t + \mathbf{c}
\end{align}
Where $I$ denotes the identity matrix. The mean vector and covariance matrix thus have the following dynamics:
\begin{subequations}
\begin{align}
    \mu_{\mathbf{x}_{t+1}} &= (I + A)\mu_{\mathbf{x}_t} + B\mu_{\mathbf{w}_t} + \mathbf{c}\\
    \Sigma_{\mathbf{x}_{t+1}} &= (I+A)\Sigma_{\mathbf{x}_t}(I+A)^T + B\Sigma_{\mathbf{w}_t}B^T
\end{align}
\end{subequations}
The distributions of the disturbances are chosen to be $w_{v_t}\sim \text{Beta}(10, 1000)$ and $w_{\theta_t}\sim\mathcal{N}(0.04, 0.03)$. Fig. \ref{fig:compare_moment_prop} compares the results of our nonlinear moment propagation method against naive Monte Carlo and the standard method of propagating a linearized system. We note that the Monte Carlo results closely match our moment propagation method, while the second moments of the linearized system diverge quite significantly from those obtained from the other two methods. The nonlinear moment propagation method is also very fast; in our C++ implementation, propagation over $10^5$ time steps takes only 35.2 ms for a computation time of ~$0.3$ microseconds per time step. This is on par with speeds for well-implemented moment propagation methods for linear systems and many orders of magnitude faster than naive Monte Carlo.
\begin{figure}
    \centering
    \includegraphics[width=\linewidth]{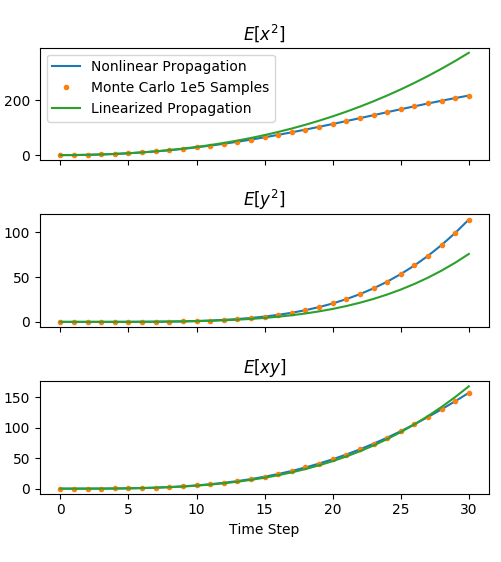}
    \caption{Comparison of propagated moments with our nonlinear moment propagation technique, naive Monte Carlo, and moment propagation with the linearized system.}
    \label{fig:compare_moment_prop}
\end{figure}

\section{Conclusions}
In this paper, we present a new framework for nonlinear moment propagation with wide applicability to problems in robotics and control, namely to trigonometric polynomial systems. The key result is an algorithm that can search for a moment state dynamical system that can be used to compute moments of the state vector given the moments of external disturbances. We demonstrate its application to a Dubin's car model with actuation uncertainty which is then used in an improved distributionally robust RRT. Finally, numerical experiments show that our method is as fast as moment propagation techniques that employ linearization while being exact. As \textit{TreeRing} can produce relatively large systems of equations, making manual conversion to code both tedious and prone to error, future work should consider developing code generation capabilities. There are many other potential applications of this exact moment propagation technique including but not limited to Gaussian process motion planning, trajectory optimization, and nonlinear estimation which should be investigated further in future works.
\bibliographystyle{plain}
\bibliography{references}
\end{document}